\newtheorem{thm}{Theorem}
\newtheorem{cor}[thm]{Corollary}
\newtheorem{prop}[thm]{Proposition}
\theoremstyle{definition}
\newtheorem*{remark}{Remark}
\def\qed{\nobreak\hfill $\square$}
\def\<{\langle}
\def\>{\rangle}
\def\det{\mathrm{Det}}
\def\bM{{\bf M}}
\def\bM{{\mathbf{M}}}
\def\M3{M_3(\bbbc)}
\def\M3r{M_3(\bbbr)}
\def\diag{\mathrm{Diag}}
\def\bbbr{{\mathbb R}}
\def\bbbc{{\mathbb C}}
\newcommand{\R}{\mathbb{R}}
\newcommand{\C}{\mathbb{C}}
\newcommand*{\be}{\begin{equation}}
\newcommand*{\ee}{\end{equation}}
\newcommand*{\bit}{\begin{itemize}}
\newcommand*{\eit}{\end{itemize}}
\newcommand*{\ben}{\begin{enumerate}}
\newcommand*{\een}{\end{enumerate}}
\newcommand{\tr}{\mathrm{Tr}}
\newcommand{\norm}[1]{\left\|#1\right\|}
\newcommand*{\inner}[2]{\left<#1,\,#2\right>}
\newcommand*{\ler}[1]{\left(#1\right)}
\newcommand{\ba}{\begin{array}}
\newcommand{\ea}{\end{array}}
\newcommand{\bu}{\mathbf{u}}
\newcommand{\bv}{\mathbf{v}}
\newcommand{\bB}{\mathbf{B}}
\newcommand{\fP}{\mathbb{P}}
\newcommand{\fD}{\mathbb{D}}
\begin{document}
\title{On algebraic endomorphisms of the Einstein gyrogroup}

\author{LAJOS MOLN\'AR}
\address{MTA-DE "Lend\" ulet" Functional Analysis Research Group, Institute of Mathematics\\
         University of Debrecen\\
         H-4010 Debrecen, P.O. Box 12, Hungary}
\email{molnarl@science.unideb.hu}
\urladdr{http://www.math.unideb.hu/\~{}molnarl/}

\author{D\'ANIEL VIROSZTEK}
\address{Institute of Mathematics\\ 
         Budapest University of Technology and Economics\\
         H-1521 Budapest, Hungary} 
\email{virosz@math.bme.hu} 
\urladdr{http://www.math.bme.hu/~virosz}
\thanks{The first author was supported by the "Lend\" ulet" Program (LP2012-46/2012) of the Hungarian Academy of Sciences. The second author was partially supported by the Hungarian Scientific Research Fund (OTKA) Reg. No.  K104206}

\begin{abstract}
We describe the structure of all continuous algebraic endomorphisms of the open unit ball $\bB$ of $\R^3$ equipped with the Einstein velocity addition. We show that any nonzero such transformation originates from an orthogonal linear transformation on $\R^3$.
\end{abstract}
\maketitle
\section{Introduction}

Velocity addition was defined by Einstein in his famous paper of 1905 which founded the special theory of relativity. In fact, the whole theory is essentially based on Einstein velocity addition law, see \cite{ein}. The algebraic structure corresponding to this operation is a particular example of so-called gyrogroups the general theory of which has been developed by Ungar \cite{ung}. 

\par
The Einstein gyrogroup of dimension three is the pair $(\bB, \oplus),$ where $\bB=\{\bu\in \R^3 : \norm{\bu}< 1 \}$ and $\oplus$ is the binary operation on $\bB$ given by
\be \label{relsum}
\oplus: \bB \times \bB \rightarrow \bB; \, (\bu,\bv)\mapsto \bu \oplus \bv:= \frac{1}{1+\inner{\bu}{\bv}}\ler{\bu+\frac{1}{\gamma_\bu}\bv+\frac{\gamma_\bu}{1+\gamma_\bu}\inner{\bu}{\bv} \bu},
\ee
where $\gamma_\bu=\ler{1-\norm{\bu}^2}^{-\frac{1}{2}}$ is the so-called Lorentz factor.
The operation $\oplus$ is called Einstein velocity addition or relativistic sum (cf. \cite{abe,kim}).
Here and throughout this paper, $\inner{\cdot}{\cdot}$ stands for the usual Euclidean inner product and $\norm{\cdot}$ denotes the induced norm.
\par
The study of automorphisms (more generally, endomorphisms) of algebraic structures is of special importance in most areas of both mathematics and mathematical physics.
The aim of this note is to determine the (continuous) endomorphisms (in particular, automorphisms) of the fundamental structure $(\bB, \oplus)$ of special relativity theory. 
Important information on isomorphisms, automorphisms (symmetries) of quantum structures can be found in \cite{CasVitLahLev04} and other sorts of so-called preservers on similar structures are discussed in \cite{MB}, Chapter 2.

The main theorem of this paper is obtained as an application of our recent result on so-called Jordan triple endomorphisms of $2 \times 2$ positive definite matrices \cite[Theorem 1]{lmdv}.
The other ingredient of our argument is the result \cite[Theorem 3.4]{kim} of Kim. The discussion below may look rather simple but the mathematical facts and results that we combine are highly nontrivial. 

Our main result reads as follows.

\begin{thm} \label{tmain}
Let $\beta: \bB \rightarrow \bB$ be a continuous map. We have $\beta$ is an algebraic endomorphism with respect to the operation $\oplus$, i.e., $\beta$ satisfies
$$
\beta (\bu \oplus \bv)=\beta (\bu) \oplus \beta(\bv), \quad \bu,\bv\in \bB
$$
if and only if
\begin{itemize}
\item[(i)]
either there is an orthogonal matrix $O \in \bM_3(\R)$ such that $$\beta(\bv)=O\bv, \quad \bv\in \bB;$$
\item[(ii)]
or we have
$$
\beta(\bv)=0, \quad \bv\in \bB.
$$
\end{itemize}
\end{thm}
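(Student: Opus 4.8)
\emph{Proof strategy.} The ``if'' part is immediate: the Einstein sum \eqref{relsum} is built only from the Euclidean inner product and from the Lorentz factors $\gamma_\bu$, and all of these are preserved by every orthogonal $O\in\bM_3(\R)$ (indeed $\langle O\bu,O\bv\rangle=\langle\bu,\bv\rangle$ and hence $\gamma_{O\bu}=\gamma_\bu$), so $O(\bu\oplus\bv)=O\bu\oplus O\bv$; and the constant map $\bv\mapsto 0$ works because $0\oplus 0=0$. The content is the converse, and the plan is to transport the whole problem to $2\times 2$ positive definite matrices, where it turns into a statement about Jordan triple maps that is available in \cite{lmdv}.

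First I would use Kim's theorem \cite[Theorem 3.4]{kim}. It provides an explicit homeomorphism $\Phi$ of $\bB$ onto the set $\mathcal P$ of $2\times 2$ positive definite matrices of unit determinant --- concretely $\Phi(\bv)=\gamma_\bv\bigl(I+v_1\sigma_1+v_2\sigma_2+v_3\sigma_3\bigr)$, where $\sigma_1,\sigma_2,\sigma_3$ are the Pauli matrices --- which intertwines Einstein addition with the product $(A,B)\mapsto A^{1/2}BA^{1/2}$, that is $\Phi(\bu\oplus\bv)=\Phi(\bu)^{1/2}\Phi(\bv)\Phi(\bu)^{1/2}$; in particular $\Phi(0)=I$ and $\Phi(-\bv)=\Phi(\bv)^{-1}$. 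Given a continuous endomorphism $\beta$, put $\alpha:=\Phi\circ\beta\circ\Phi^{-1}\colon\mathcal P\to\mathcal P$. It is continuous and satisfies $\alpha\bigl(A^{1/2}BA^{1/2}\bigr)=\alpha(A)^{1/2}\alpha(B)\alpha(A)^{1/2}$ for all $A,B\in\mathcal P$. Setting $B=A$ yields $\alpha(A^2)=\alpha(A)^2$, hence $\alpha(A^{1/2})=\alpha(A)^{1/2}$ (square roots of elements of $\mathcal P$ stay in $\mathcal P$), and substituting this back turns the relation into the Jordan triple identity $\alpha(CBC)=\alpha(C)\alpha(B)\alpha(C)$, valid for all $B,C\in\mathcal P$.

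Next I would extend $\alpha$ to the full cone of $2\times 2$ positive definite matrices: writing such an $A$ uniquely as $tC$ with $t=\sqrt{\det A}>0$ and $C\in\mathcal P$, set $\widehat\alpha(tC):=t\,\alpha(C)$. Using $\det(XYX)=\det(X)^2\det(Y)$ one checks that $\widehat\alpha$ is a continuous Jordan triple endomorphism of the whole cone, so \cite[Theorem 1]{lmdv} applies and enumerates the possibilities: modulo conjugation by a fixed unitary, $\widehat\alpha$ is a power of the determinant times one of $A$, $A^{-1}$, $A^{T}$, $(A^{T})^{-1}$, or else it sends everything to a scalar matrix. Restricting to $\mathcal P$ all the determinant factors become $1$, so $\alpha$ is either $A\mapsto U\tau(A)^{\pm1}U^*$, with $U$ unitary and $\tau$ the identity or the transpose, or the constant map $A\mapsto I$. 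Transporting back through $\Phi$ finishes the argument: conjugation by a unitary acts on the traceless part $v_1\sigma_1+v_2\sigma_2+v_3\sigma_3$ as a rotation $\bv\mapsto O_U\bv$, the transpose acts as the reflection $\bv\mapsto(v_1,-v_2,v_3)$, and inversion in $\mathcal P$ corresponds to $\bv\mapsto-\bv$ (in each case $\gamma_\bv$ is untouched); composing these, $\Phi^{-1}\circ\alpha\circ\Phi$ is precisely $\bv\mapsto O\bv$ for a suitable orthogonal $O\in\bM_3(\R)$, while the constant $\alpha$ gives $\beta\equiv 0$.

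The steps needing genuine care --- rather than new ideas, since the hard classification is imported from \cite{lmdv} --- are the ones at the interface: verifying that Kim's map really does intertwine $\oplus$ with $A^{1/2}BA^{1/2}$ and is a homeomorphism, that the extension $\widehat\alpha$ is well defined and genuinely Jordan triple on the whole positive definite cone (so that \cite[Theorem 1]{lmdv} is legitimately applicable), and that each normal form delivered by that theorem collapses, after restriction to unit determinant and transport by $\Phi$, to an orthogonal linear map or to the zero map --- leaving no room for a continuous but nonlinear endomorphism of $(\bB,\oplus)$.
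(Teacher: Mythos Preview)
Your proposal is correct and follows essentially the same route as the paper: transport $\beta$ via Kim's isomorphism to the determinant-one positive definite $2\times 2$ matrices with the product $A^{1/2}BA^{1/2}$, extend to the full cone $\fP_2$ and invoke \cite[Theorem 1]{lmdv}, then pull the resulting normal forms back to orthogonal maps on $\bB$. The only cosmetic difference is that the paper factors your map $\Phi$ as $\tau\circ\rho$ through the density matrices $(\fD,\odot)$ before reaching $\fP_2^1$, and states the classification from \cite{lmdv} without a separate transpose case (in dimension two the transpose is already of the form $(\det A)\,VA^{-1}V^*$, so it is absorbed into case (2)).
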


Here continuity refers to the usual topology on $\bB$ inherited from the Euclidean space $\R^3$. For a related comment see the remarks at the end of the paper.

By the above result we have the interesting conclusion that the group of all (continuous) automorphisms of the Einstein gyrogroup $(\bB,\oplus)$ coincides with the orthogonal group of $\R^3$.

\section{Open Bloch ball, qubit density matrices, and $2 \times 2$ positive definite matrices of determinant one}

To prove our main result we need 
an important observation made by Kim what we present below.

We denote by $\fP_2$ the set of all $2 \times 2$ positive definite complex matrices.
Let $\fD$ stand for the set of all $2\times 2$ regular density matrices, i.e., the collection of all  elements of $\fP_2$ with trace 1, $$\fD=\{A \in \fP_2\, | \, \tr A=1\}.$$ From the quantum theoretical point of view, $\fD$ is the set of all regular density matrices of the $2$-level quantum system.
One can define a binary operation $\odot$ on $\fD$ as
\begin{equation*} 
\odot: \fD \times \fD \rightarrow \fD; \, (A,B) \mapsto A \odot B:= \frac{1}{\tr AB} A^{\frac{1}{2}} B A^{\frac{1}{2}}.
\end{equation*}
For certain reasons, we call $\odot$ the normalized sequential product.
\par
The well-known Bloch parametrization of regular density matrices is the following map:
\begin{equation*} 
\rho: \R^3 \supset \bB \rightarrow \bM_2(\C); \, \left[\ba{c}v_1\\v_2\\v_3\ea\right]=\bv \mapsto \rho(\bv):=\frac{1}{2} \left[\ba{cc}1+v_3 & v_1- i v_2 \\ v_1 + i v_2 & 1-v_3 \ea\right].
\end{equation*}
The transformation $\rho$ is clearly a bijection between $\bB$ and $\fD,$ and in fact, by \cite[Theorem 3.4]{kim}, much more is true.
\begin{thm}[S. Kim] \label{T:kim}
The Bloch parametrization $\rho: (\bB, \oplus) \rightarrow (\fD, \odot); \, \bv \mapsto \rho(\bv)$ is an isomorphism.
\end{thm}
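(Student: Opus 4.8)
The plan is to verify directly that the bijection $\rho$ intertwines the two operations, i.e., that $\rho(\bu \oplus \bv) = \rho(\bu) \odot \rho(\bv)$ for all $\bu, \bv \in \bB$; since $\rho$ is already a bijection (and, having affine entries with affine inverse, even a homeomorphism), this is all that is needed. The natural setting for the computation is the Pauli basis: writing $\sigma_1,\sigma_2,\sigma_3$ for the Pauli matrices and $\bx \cdot \sigma := x_1\sigma_1 + x_2\sigma_2 + x_3\sigma_3$, one has $\rho(\bv) = \tfrac12(I + \bv\cdot\sigma)$, so the entire argument becomes linear algebra in the four-dimensional real span of $I,\sigma_1,\sigma_2,\sigma_3$. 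The one nonelementary identity used throughout is $(\bx\cdot\sigma)(\by\cdot\sigma) = \inner{\bx}{\by}\,I + \mathrm{i}\,(\bx\times\by)\cdot\sigma$; in particular $(\bu\cdot\sigma)^2 = \norm{\bu}^2 I$.

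First I would record the elementary ingredients: a direct trace computation gives $\tr(\rho(\bu)\rho(\bv)) = \tfrac12(1 + \inner{\bu}{\bv})$, which is strictly positive on $\bB$ (so $\odot$ is well defined and its values lie in $\fD$), while $\tr\rho(\bu) = 1$ and $\det\rho(\bu) = \tfrac14(1-\norm{\bu}^2) = \tfrac{1}{4\gamma_\bu^2}$. Feeding the latter two into the standard Cayley--Hamilton formula $M^{1/2} = \big(M + \sqrt{\det M}\, I\big)\big/\sqrt{\tr M + 2\sqrt{\det M}}$ for a $2\times 2$ positive definite $M$ yields the compact closed form
\[
\rho(\bu)^{1/2} \;=\; \sqrt{\tfrac{\gamma_\bu}{\gamma_\bu+1}}\;\Big(\rho(\bu) + \tfrac{1}{2\gamma_\bu}\, I\Big).
\]
Producing this manageable expression for the operator square root is the one step calling for a little thought; everything afterwards is bookkeeping.

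Then I would substitute this into $\rho(\bu)^{1/2}\rho(\bv)\rho(\bu)^{1/2}$ and expand $\big(\rho(\bu)+cI\big)\rho(\bv)\big(\rho(\bu)+cI\big)$ with $c=\tfrac{1}{2\gamma_\bu}$ into the pieces $\rho(\bu)\rho(\bv)\rho(\bu)$, $c\big(\rho(\bu)\rho(\bv)+\rho(\bv)\rho(\bu)\big)$ and $c^2\rho(\bv)$, reducing each via the Pauli identity to the form $\alpha I + \beta\,(\bu\cdot\sigma) + \gamma\,(\bv\cdot\sigma)$. A useful point is that the transverse terms cancel automatically: the contributions $\mathrm{i}(\bu\times\bv)\cdot\sigma$ enter $\rho(\bu)\rho(\bv)$ and $\rho(\bv)\rho(\bu)$ with opposite signs, and $(\bu\cdot\sigma)(\bv\cdot\sigma)(\bu\cdot\sigma) = 2\inner{\bu}{\bv}(\bu\cdot\sigma) - \norm{\bu}^2(\bv\cdot\sigma)$, so no $\sigma$-component orthogonal to both $\bu$ and $\bv$ survives. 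After dividing by $\tr(\rho(\bu)\rho(\bv)) = \tfrac12(1+\inner{\bu}{\bv})$ and an elementary simplification of the $\gamma_\bu$-rational coefficients, comparison with $\rho(\bu\oplus\bv) = \tfrac12\big(I + (\bu\oplus\bv)\cdot\sigma\big)$ reduces to three scalar equalities: the coefficient of $I$ must come out to $\tfrac12$ (a consistency check that the result lands in $\fD$), the coefficient of $\bv\cdot\sigma$ reproduces the $\tfrac{1}{\gamma_\bu}\bv$ contribution in \eqref{relsum}, and the coefficient of $\bu\cdot\sigma$ reproduces the $\bu + \tfrac{\gamma_\bu}{1+\gamma_\bu}\inner{\bu}{\bv}\bu$ contribution. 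The main obstacle is therefore not conceptual but organizational: keeping the multi-term expansion and the various $\gamma_\bu$-dependent factors under control until the factors cancel and \eqref{relsum} emerges.
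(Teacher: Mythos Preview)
The paper does not supply its own proof of this statement: Theorem~\ref{T:kim} is quoted from Kim \cite[Theorem 3.4]{kim} and used as a black box, so there is no in-paper argument to compare yours against.

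That said, your plan is sound and would go through. The Pauli-basis bookkeeping is set up correctly: the identities $(\bx\cdot\sigma)(\by\cdot\sigma)=\inner{\bx}{\by}I+\mathrm{i}(\bx\times\by)\cdot\sigma$, the trace and determinant values of $\rho(\bu)$, the Cayley--Hamilton square-root formula, and the closed form $\rho(\bu)^{1/2}=\sqrt{\gamma_\bu/(\gamma_\bu+1)}\big(\rho(\bu)+\tfrac{1}{2\gamma_\bu}I\big)$ are all correct, as is the triple-product reduction $(\bu\cdot\sigma)(\bv\cdot\sigma)(\bu\cdot\sigma)=2\inner{\bu}{\bv}(\bu\cdot\sigma)-\norm{\bu}^2(\bv\cdot\sigma)$ and the observation that the $(\bu\times\bv)\cdot\sigma$ terms cancel by symmetry. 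From there it really is just collecting coefficients of $I$, $\bu\cdot\sigma$, $\bv\cdot\sigma$ and matching them against the three pieces of \eqref{relsum}. What you have written is a proof sketch rather than a proof (the final coefficient-matching is asserted, not carried out), but the route is the right one and there is no missing idea.
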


Throughout this note the word 'isomorphism' refers to a bijective map between algebraic structures which respects (preserves) the relevant algebraic operation(s).
\par
Let us now consider a structure which is similar to the space of $2 \times 2$ regular density matrices equipped with the normalized sequential product. Namely, Let $\fP_2^1$ be the set of all $2 \times 2$ positive definite matrices with determinant $1.$ The sequential product $\boxdot$ on $\fP_2^1$ is defined as
\begin{equation*} 
\boxdot: \fP_2^1 \times \fP_2^1 \rightarrow \fP_2^1; \, (A,B) \mapsto A \boxdot B:= A^{\frac{1}{2}} B A^{\frac{1}{2}}.
\end{equation*}
We show that $(\fD, \odot)$ and $(\fP_2^1, \boxdot)$ are isomorphic structures.
\begin{prop}\label{cl1}
The map $\tau: (\fD, \odot) \rightarrow (\fP_2^1, \boxdot); \, A \mapsto \tau(A):=\frac{1}{\sqrt{\det A}} A$ is an isomorphism.
\end{prop}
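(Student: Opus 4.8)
The plan is to check the three ingredients of the claim directly: that $\tau$ maps $\fD$ into $\fP_2^1$, that it is a bijection, and that it intertwines $\odot$ with $\boxdot$. For well-definedness, observe that for $A \in \fD$ the matrix $\tau(A) = \frac{1}{\sqrt{\det A}} A$ is positive definite, being a positive scalar multiple of a positive definite matrix; and since we are working with $2 \times 2$ matrices, $\det(cX) = c^2 \det X$, so $\det \tau(A) = \frac{1}{\det A}\,\det A = 1$ and hence $\tau(A) \in \fP_2^1$.

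For bijectivity I would produce the inverse explicitly. The natural candidate is the trace-normalization $\fP_2^1 \ni P \mapsto \frac{1}{\tr P} P$, which clearly lands in $\fD$. The relevant identity is $\tr \tau(A) = \frac{\tr A}{\sqrt{\det A}} = \frac{1}{\sqrt{\det A}}$ for $A \in \fD$, which shows that normalizing $\tau(A)$ by its trace recovers $A$; conversely, using $\det P = 1$ one gets $\det\bigl(\frac{1}{\tr P} P\bigr) = \frac{1}{(\tr P)^2}$, whence $\tau\bigl(\frac{1}{\tr P} P\bigr) = \tr P \cdot \frac{1}{\tr P} P = P$. So the two maps are mutually inverse.

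For the algebraic part I would compute both sides and compare. Multiplicativity of the determinant together with $\det(cX) = c^2 \det X$ yields $\det(A \odot B) = \frac{\det A\,\det B}{(\tr AB)^2}$, hence $\sqrt{\det(A\odot B)} = \frac{\sqrt{\det A}\sqrt{\det B}}{\tr AB}$ and $\tau(A \odot B) = \frac{1}{\sqrt{\det A}\sqrt{\det B}}\, A^{1/2} B A^{1/2}$. On the other side, $\tau(A)^{1/2} = (\det A)^{-1/4} A^{1/2}$ (the positive square root of a positive scalar times $A$) and $\tau(B) = (\det B)^{-1/2} B$, so $\tau(A) \boxdot \tau(B) = \tau(A)^{1/2}\tau(B)\tau(A)^{1/2} = \frac{1}{\sqrt{\det A}\sqrt{\det B}}\, A^{1/2} B A^{1/2}$, which matches $\tau(A \odot B)$. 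There is no real obstacle here — this is a direct verification; the only point needing a little care is propagating the scalar $\det A$ through the operator square root in $\tau(A)^{1/2}$ (producing a fourth root) and invoking $\det(cX) = c^2 \det X$ so that the normalizing constants cancel exactly.
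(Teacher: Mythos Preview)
Your proof is correct and follows essentially the same approach as the paper: a direct verification of well-definedness, bijectivity (via the trace-normalization inverse $P\mapsto \frac{1}{\tr P}P$, which the paper also uses for surjectivity), and the homomorphism property by tracking the scalar factors through the determinant and square root. The only cosmetic difference is that the paper argues injectivity separately before exhibiting the preimage, whereas you verify both compositions with the inverse map at once.
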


\begin{proof}
To prove the injectivity assume that $\tau(A)=\tau(B)$ for some $A,B \in \fD.$ That is, $\frac{1}{\sqrt{\det A}} A=\frac{1}{\sqrt{\det B}} B,$ which means that $A$ is a positive scalar multiple of $B.$ By $\tr A=\tr B=1$ we can deduce that $\sqrt{\det A}=\sqrt{\det B}$ and therefore $A=B.$

For any $A \in \fP_2^1$ we have $\frac{1}{\tr A} A \in \fD.$ By $\det A=1$ it follows that
$$\tau\ler{\frac{1}{\tr A} A}=\frac{1}{\sqrt{\det \ler{\frac{1}{\tr A} A}}}\frac{1}{\tr A} A=\frac{1}{\frac{\sqrt{\det A}}{\tr A}}\frac{1}{\tr A} A=A.$$
This shows the surjectivity of $\tau$.

Finally, we need to show that $\tau$ respects the operations $\odot,\boxdot$. 
Using the properties of the determinant, for any $A,B\in \fD$ we compute
$$
\tau\ler{A \odot B}= \frac{1}{\sqrt{\det \ler{\frac{1}{\tr AB} A^{\frac{1}{2}} B A^{\frac{1}{2}}}}}\frac{A^{\frac{1}{2}} B A^{\frac{1}{2}}}{\tr AB}=\frac{\tr AB}{\sqrt{\det \ler{A^{\frac{1}{2}} B A^{\frac{1}{2}}}}}\frac{A^{\frac{1}{2}} B A^{\frac{1}{2}}}{\tr AB}
$$
$$
=\ler{\frac{A}{\sqrt{\det A}}}^{\frac{1}{2}}\frac{B}{\sqrt{\det B}} \ler{\frac{A}{\sqrt{\det A}}}^{\frac{1}{2}}=\frac{A}{\sqrt{\det A}} \boxdot \frac{B}{\sqrt{\det B}}=\tau(A)\boxdot\tau(B).
$$
This completes the proof.
\qed
\end{proof}

Observe that the inverse of $\tau$ is given by $\tau^{-1}(A)=\frac{1}{\tr A} A$, $A\in \fP_2$.

\section{Proof of the main result}

To verify the main result of the paper let us recall the following recent result of ours \cite[Theorem 1]{lmdv} which, beside Kim's observation, is the second main ingredient of the proof of Theorem~\ref{tmain}. It may look surprising but its content, i.e., the description of the structure of the continuous so-called Jordan triple endomorphisms of $\fP_2$, was an open problem for quite a while and the solution we have finally found is rather complicated resting on highly nontrivial arguments and facts.
 
Below continuity of maps on matrix structures refers to any one of the equivalent linear norm topologies on the full matrix algebra.    

\begin{thm}\label{jtrip}
Let $\phi: \fP_2 \rightarrow \fP_2$ be a continuous map.
Assume that it is Jordan triple endomorphism, i.e., $\phi$ satisfies
$$
\phi(ABA)=\phi(A)\phi(B)\phi(A),\quad A,B\in \fP_2.
$$
Then $\phi$ is of one of the following forms:
\begin{itemize}
\item[(1)]
there is a unitary matrix $U\in \bM_2(\C)$ and a real number $c$ such that $$\phi(A)=(\det A)^c UAU^*, \quad A\in \fP_2;$$
\item[(2)]
there is a unitary matrix $V\in \bM_2(\C)$ and a real number $d$ such that $$\phi(A)=(\det A)^d VA^{-1}V^*, \quad A\in \fP_2;$$
\item[(3)]
there is a unitary matrix $W\in \bM_2(\C)$ and real numbers $c_1,c_2$ such that 
$$
\phi(A)=W\diag [(\det A)^{c_1}, (\det A)^{c_2}]W^*, \quad A\in \fP_2.
$$
\end{itemize}
\end{thm}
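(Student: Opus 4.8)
The plan is to extract rigidity from the triple identity in three stages: first pin down $\phi$ on one-parameter subgroups, then peel off a scalar determinant factor, and finally classify the ``determinant-one part'' by an infinitesimal (tangent-space) argument. First I would normalize. Putting $A=B=I$ gives $\phi(I)=\phi(I)^3$, which for positive definite $\phi(I)$ forces $\phi(I)=I$. Taking $B=A^{-1}$ yields $\phi(A)=\phi(A)\phi(A^{-1})\phi(A)$, hence $\phi(A^{-1})=\phi(A)^{-1}$, while $B=I$ gives $\phi(A^2)=\phi(A)^2$. A straightforward induction based on $A\cdot A^{2k-1}\cdot A=A^{2k+1}$ propagates $\phi(A^n)=\phi(A)^n$ to all integers $n$; uniqueness of positive definite roots upgrades this to rational exponents, and continuity to $\phi(A^t)=\phi(A)^t$ for every real $t$. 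Thus $\phi$ intertwines the one-parameter groups $t\mapsto A^t$ and $t\mapsto\phi(A)^t$, which is the engine for everything below.

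Next I would separate the determinant. Writing $\delta(A)=\det\phi(A)>0$ and applying $\det$ to the triple identity gives $\delta(ABA)=\delta(A)^2\delta(B)$, the same relation satisfied by $A\mapsto\det A$. Using $\det(A^{1/2}BA^{1/2})=\det A\,\det B$ one checks that the renormalized map $\tilde\phi(A):=\delta(A)^{-1/2}\phi(A)$ is again a continuous triple endomorphism, now valued in $\fP_2^1$. Since $ABA\in\fP_2^1$ whenever $A,B\in\fP_2^1$, restricting $\tilde\phi$ to $\fP_2^1$ gives a continuous triple endomorphism of $\fP_2^1$, and one verifies that the three conclusions of the theorem restrict on $\fP_2^1$ to exactly three possibilities: the congruence $A\mapsto UAU^*$, the inverse-congruence $A\mapsto VA^{-1}V^*$, and the constant map $A\mapsto I$. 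Hence the substance of the theorem is this trichotomy for $\tilde\phi$ on $\fP_2^1$, together with an easier analysis of the commutative directions: the scalar $\delta$ and the restriction of $\tilde\phi$ to scalar matrices $\lambda I$ are governed, after composing with $\log$ and $\exp$, by continuous maps on the line that are additive on commuting pairs and homogeneous, hence are genuine powers. Reassembling $\phi=\delta^{1/2}\tilde\phi$ then produces the determinant exponents $c,d$ of cases (1), (2) and the pair $c_1,c_2$ of case (3).

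To classify $\tilde\phi$ on $\fP_2^1$ I would use that this is, as a Jordan triple system, the positive definite determinant-one Hermitian $2\times2$ matrices, a Riemannian symmetric space whose base point $I$ is fixed by $\tilde\phi$, and that the inversion $A\mapsto A^{-1}$ is itself a triple automorphism, the geodesic symmetry at $I$, since $(ABA)^{-1}=A^{-1}B^{-1}A^{-1}$. Composing with the inversion interchanges cases (1) and (2), so it suffices to treat $\tilde\phi$ up to this involution. I would then pass to the logarithmic chart $\Lambda(X):=\log\tilde\phi(\exp X)$, defined on the tangent space of traceless Hermitian matrices, identified with $\R^3$. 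The power law $\tilde\phi(A^t)=\tilde\phi(A)^t$ makes $\Lambda$ continuous and homogeneous of degree one, and the target trichotomy corresponds to $\Lambda$ being a proper orthogonal map (case (1)), an improper orthogonal map (case (2)), or the zero map (case (3)). The goal is therefore to show that $\Lambda$ is in fact \emph{linear} and is a homomorphism of the tangent Jordan triple product $\{X,Y,Z\}=\tfrac12(XYZ+ZYX)$, after which the known description of the linear triple homomorphisms of the Jordan algebra $H_2(\C)$ of Hermitian matrices, essentially $O(3)$ on the traceless part together with the degenerate maps of smaller rank, finishes the argument.

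The main obstacle is precisely this rigidity step: degree-one homogeneity is very far from linearity for a map $\R^3\to\R^3$, so linearity of $\Lambda$ must be wrung out of the full triple identity rather than from the power law alone. Concretely I would translate $\tilde\phi(ABA)=\tilde\phi(A)\tilde\phi(B)\tilde\phi(A)$ into a functional equation for $\Lambda$ by analysing, on one hand, commuting pairs $A,B$, where the identity linearizes cleanly, and, on the other hand, pairs generating the compact $SU(2)$-congruence action, and then leverage continuity together with transitivity of that action to force $\Lambda$ to respect the linear structure. The second delicate point, which I expect to be equally hard, is the anti-degeneracy analysis: one must show that a triple endomorphism whose infinitesimal part $\Lambda$ has intermediate rank cannot occur, so that the only collapse is the total one leading to case (3). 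These two points, upgrading continuity to linearity and excluding partial degeneracies, are where essentially all the difficulty sits, in line with the authors' remark that the classification rests on highly nontrivial arguments.
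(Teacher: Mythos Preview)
This theorem is not proved in the present paper; it is quoted from the authors' companion manuscript \cite{lmdv} and used here as a black box. There is therefore no in-paper argument to compare your proposal against.

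On its own merits, your outline sets the stage correctly---the normalizations $\phi(I)=I$, $\phi(A^{-1})=\phi(A)^{-1}$, $\phi(A^t)=\phi(A)^t$ and the determinant peeling $\tilde\phi=\delta^{-1/2}\phi$ are valid and standard---but it is not a proof. As you yourself concede, the two decisive steps, upgrading the degree-one homogeneous map $\Lambda$ to a \emph{linear} one and ruling out partial degeneracies, are left open, and these are precisely where the content of the theorem lies; the Baker--Campbell--Hausdorff translation of the triple identity is highly nonlinear, and you give no concrete mechanism for extracting linearity from it. There is a further gap you do not flag: you assert that ``the scalar $\delta$\dots\ [is] governed\dots\ by continuous maps on the line,'' but $\delta=\det\circ\phi$ is a map $\fP_2\to\R_{>0}$, and the relation $\delta(ABA)=\delta(A)^2\delta(B)$ does not by itself force $\delta$ to factor through $\det$. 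Showing that $\log\delta$ vanishes on $\fP_2^1$ is essentially the same rigidity problem (triviality of a continuous additive cocycle on the symmetric space $\fP_2^1\cong SL(2,\C)/SU(2)$) that you face for $\Lambda$, so the reassembly step cannot be taken for granted either. In short, your proposal accurately locates the difficulty and is consistent with the authors' remark that the classification ``rests on highly nontrivial arguments,'' but it does not resolve it.
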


Using this theorem the continuous sequential endomorphisms of $\fP_2^1$ can be described as follows.
\begin{cor}\label{p21}
Let $\phi: \fP_2^1 \rightarrow \fP_2^1$ be a continuous endomorphism with respect to the operation $\boxdot$ meaning that $\phi$ satisfies 
$$
\phi(A\boxdot B)=\phi(A)\boxdot \phi(B),\quad A,B\in \fP_2^1.
$$
Then $\phi$ is of one of the following forms:
\begin{itemize}
\item[(1)]
there is a unitary matrix $U\in \bM_2(\C)$ such that $$\phi(A)=UAU^*, \quad A\in \fP_2^1;$$
\item[(2)]
there is a unitary matrix $V\in \bM_2(\C)$ such that $$\phi(A)=VA^{-1}V^*, \quad A\in \fP_2^1;$$
\item[(3)] we have
$$
\phi(A)=I, \quad A\in \fP_2^1.
$$
\end{itemize}
\end{cor}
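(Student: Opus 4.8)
The plan is to deduce Corollary~\ref{p21} from Theorem~\ref{jtrip} by extending $\phi$ to a continuous Jordan triple endomorphism of the full positive definite cone $\fP_2$ and then restricting the resulting classification back to $\fP_2^1$.

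First I would record the elementary decomposition $A=(\det A)^{1/2}A_0$, with $A_0:=(\det A)^{-1/2}A\in\fP_2^1$, of an arbitrary $A\in\fP_2$ into a positive scalar and a determinant-one matrix; the map $A\mapsto A_0$ is continuous and satisfies $(ABA)_0=A_0B_0A_0$, since $\det(ABA)=(\det A)^2\det B$. Define $\psi:\fP_2\to\fP_2$ by $\psi(A):=\phi(A_0)$. It is continuous, and it agrees with $\phi$ on $\fP_2^1$. The key --- and only genuinely clever --- point is the identity $XYX=X\boxdot(X\boxdot Y)$, valid for $X,Y\in\fP_2^1$ because $X^{1/2}\in\fP_2^1$ and $X^{1/2}\ler{X^{1/2}YX^{1/2}}X^{1/2}=XYX$. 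Applying $\phi$ to this identity, using twice that $\phi$ respects $\boxdot$, and applying the same identity to $\phi(X),\phi(Y)$, one gets $\phi(XYX)=\phi(X)\phi(Y)\phi(X)$ for all $X,Y\in\fP_2^1$; hence $\psi(ABA)=\phi\ler{A_0B_0A_0}=\phi(A_0)\phi(B_0)\phi(A_0)=\psi(A)\psi(B)\psi(A)$, so $\psi$ is a continuous Jordan triple endomorphism of $\fP_2$.

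Then I would invoke Theorem~\ref{jtrip} to put $\psi$ into one of its three families, and restrict to $\fP_2^1$, where $\det A=1$ so that all determinant factors disappear. The first family then yields $\phi(A)=\psi(A)=UAU^*$ on $\fP_2^1$; the second yields $\phi(A)=VA^{-1}V^*$; the third yields $\phi(A)=W\diag[1,1]W^*=WW^*=I$. Each of these three maps takes values in $\fP_2^1$ (all have determinant one), so we land exactly on the three alternatives of the statement. The converse --- that every listed map is a $\boxdot$-endomorphism --- is a routine verification, using $(UAU^*)^{1/2}=UA^{1/2}U^*$ in case (1), the identity $(A^{1/2}BA^{1/2})^{-1}=A^{-1/2}B^{-1}A^{-1/2}$ in case (2), and triviality in case (3).

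I expect the only real obstacle to be isolating the right extension $\psi$ together with the identity $XYX=X\boxdot(X\boxdot Y)$ --- the device that converts a morphism for the one-variable operation $\boxdot$ into a morphism for the two-variable Jordan triple product. Everything afterwards is bookkeeping resting on the multiplicativity of the determinant, the one recurring point being that $\fP_2^1$ is closed under $X\mapsto X^{1/2}$, under $\boxdot$, and under $X,Y\mapsto XYX$, so that all intermediate matrices stay inside the intended set.
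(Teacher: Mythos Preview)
Your proof is correct and follows essentially the same route as the paper: show that a $\boxdot$-endomorphism on $\fP_2^1$ is a Jordan triple endomorphism, extend it continuously to a Jordan triple endomorphism of $\fP_2$, apply Theorem~\ref{jtrip}, and restrict back. The only cosmetic differences are that the paper uses the identity $ABA=A^2\boxdot B$ (via $\phi(A^2)=\phi(A)^2$) instead of your $XYX=X\boxdot(X\boxdot Y)$, and its extension $\psi(A)=\sqrt{\det A}\,\phi(A_0)$ carries an extra scalar factor compared to your $\psi(A)=\phi(A_0)$; neither difference affects the argument.
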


\begin{proof}
If $\phi: \fP_2^1 \rightarrow \fP_2^1$ is a sequential endomorphism, then it is a Jordan triple endomorphism, as well. Indeed, $\phi(A^2)=\phi(A\boxdot A)=\phi(A)\boxdot \phi(A)=\phi(A)^2$ holds for all $A \in \fP_2^1.$ It follows that $\phi(ABA)=\phi(A^2 \boxdot B)=\phi(A^2)\boxdot \phi(B)=\ler{\phi(A)^2}^{\frac{1}{2}}\phi(B)\ler{\phi(A)^2}^{\frac{1}{2}}=\phi(A)\phi(B)\phi(A)$ for all $A,B \in \fP_2^1.$
\par
The map
$$
\psi: \fP_2 \rightarrow \fP_2; \, A\mapsto\psi(A):=\sqrt{\det A}\cdot\phi\ler{\frac{A}{\sqrt{\det A}}}
$$
is clearly a continuous Jordan triple endomorphism of $\fP_2$  which extends $\phi$ (the idea of the definition of $\psi$ comes from \cite[proof of Theorem 3]{ML15b}). Now, the statement is an immediate consequence of the previous theorem.
\qed 
\end{proof}

Using the isomorphism $\tau$ defined in Proposition~\ref{cl1} which is clearly a homeomorphism, too, we can pull back the structural result on the continuous endomorphisms of $(\fP_2^1, \boxdot)$ to $(\fD, \odot).$ Namely, the continuous endomorphism of $(\fD, \odot)$ are exactly the maps of the form
$$\tau^{-1} \circ \phi \circ \tau,$$
where $\phi$ is a continuous endomorphism of $(\fP_2^1, \boxdot).$
The following corollary can be verified by straightforward computations.
\begin{cor} \label{qubit}
Let $\alpha: \fD \rightarrow \fD$ be a continuous endomorphism with respect to the operation $\odot$. Then $\alpha$ is of one of the following forms:
\begin{itemize}
\item[(1)]
there is a unitary matrix $U\in \bM_2(\C)$ such that $$\alpha(A)=UAU^*, \quad A\in \fD;$$
\item[(2)]
there is a unitary matrix $V\in \bM_2(\C)$ such that $$\alpha(A)=\frac{VA^{-1}V^*}{\tr A^{-1}}, \quad A\in \fD;$$
\item[(3)] we have
$$
\alpha(A)=I/2, \quad A\in \fD.
$$
\end{itemize}
\end{cor}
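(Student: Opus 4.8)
The plan is to transport the classification of Corollary~\ref{p21} across the isomorphism $\tau$ of Proposition~\ref{cl1}. Since $\tau:(\fD,\odot)\to(\fP_2^1,\boxdot)$ is a homeomorphism as well as an algebraic isomorphism, for any continuous $\odot$-endomorphism $\alpha$ of $\fD$ the conjugate $\phi:=\tau\circ\alpha\circ\tau^{-1}$ is a well-defined continuous $\boxdot$-endomorphism of $\fP_2^1$ (well-definedness is automatic because $\tau$ carries $\fD$ onto $\fP_2^1$), and conversely $\alpha=\tau^{-1}\circ\phi\circ\tau$. Hence it suffices to substitute each of the three forms of $\phi$ listed in Corollary~\ref{p21} into this expression and simplify, using $\tau(A)=(\det A)^{-1/2}A$, the formula $\tau^{-1}(B)=(\tr B)^{-1}B$ from the remark after Proposition~\ref{cl1}, and the defining property $\tr A=1$ of elements $A\in\fD$.

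First I would handle case (1), $\phi(B)=UBU^*$. Then $\phi(\tau(A))=(\det A)^{-1/2}UAU^*$, whose trace is $(\det A)^{-1/2}\tr A=(\det A)^{-1/2}$; applying $\tau^{-1}$ cancels this scalar and, since $\tr A=1$, returns $\alpha(A)=UAU^*$. For case (2), $\phi(B)=VB^{-1}V^*$, I would use $\tau(A)^{-1}=(\det A)^{1/2}A^{-1}$ to get $\phi(\tau(A))=(\det A)^{1/2}VA^{-1}V^*$ with trace $(\det A)^{1/2}\tr A^{-1}$; applying $\tau^{-1}$ then yields $\alpha(A)=VA^{-1}V^*/\tr A^{-1}$. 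For case (3), $\phi(B)=I$, one immediately obtains $\alpha(A)=\tau^{-1}(I)=I/\tr I=I/2$. Conversely, each of the three displayed maps, being the $\tau$-conjugate of a map appearing in Corollary~\ref{p21}, is a continuous $\odot$-endomorphism of $\fD$, which completes the verification.

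There is essentially no genuine obstacle here beyond careful bookkeeping: the only point requiring attention is the interplay between the normalizing factors $(\det A)^{\pm 1/2}$ introduced by $\tau$ and $\tau^{-1}$ and the trace constraint $\tr A=1$ on $\fD$, which is exactly what makes the scalar prefactors collapse and the stated forms emerge.
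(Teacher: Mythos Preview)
Your proposal is correct and follows exactly the approach indicated in the paper: the paper simply notes that $\tau$ is both an isomorphism and a homeomorphism, so the continuous $\odot$-endomorphisms of $\fD$ are precisely the maps $\tau^{-1}\circ\phi\circ\tau$ with $\phi$ as in Corollary~\ref{p21}, and then leaves the three case computations as ``straightforward.'' You have carried out those computations explicitly and accurately.
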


Putting all information we have together, the proof of the main result is now easy.

\begin{proofs}
We have learned from the result Theorem~\ref{T:kim} due to Kim that the Bloch parametrization $\rho$ is an isomorphism between $(\bB, \oplus)$ and $(\fD, \odot)$. Clearly, $\rho$ is a homeomorphism, too. Therefore, the continuous endomorphisms of $(\bB, \oplus)$ are exactly the maps of the form $\beta=\rho^{-1} \circ \alpha \circ \rho,$ where $\alpha$ is a continuous endomorphism of $(\fD,\odot).$
\par
By Corollary~\ref{qubit} there are three possibilities.
Assume first that we have a unitary $U\in \bM_2(\C)$ such that 
$\alpha(A)=UAU^*,$ $A\in \fD$. Denote by ${\bf H}_2^0(\C)$ the linear space of all traceless self-adjoint $2\times 2$ complex matrices and equip this space with the inner product $\langle A,B\rangle := \frac{1}{2} \tr AB$, $A,B\in {\bf H}_2^0(\C)$. Define
\begin{equation*} 
\gamma: \R^3 \rightarrow {\bf H}_2^0(\C); \, \left[\ba{c}v_1\\v_2\\v_3\ea\right]=\bv \mapsto \gamma(\bv):=\left[\ba{cc}v_3 & v_1- i v_2 \\ v_1 + i v_2 & -v_3 \ea\right].
\end{equation*}
Clearly, $\gamma$ is a linear isomorphism from $\R^3$ onto ${\bf H}_2^0(\C)$ which preserves the inner product. Define $\tilde \alpha : {\bf H}_2^0(\C) \to {\bf H}_2^0(\C)$ by $\tilde \alpha(A)=UAU^*$, $A\in {\bf H}_2^0(\C)$. Then $O:=\gamma^{-1} \circ \tilde \alpha \circ \gamma$ is an orthogonal linear transformation on $\R^3$ and using the relation $\gamma (\bv)=2\rho(\bv)-I$, $\bv\in \bB$, we easily deduce that
$\rho \circ \alpha \circ \rho=O$ holds.
\par
If $\alpha(A)=\frac{VA^{-1}V^*}{\tr A^{-1}},$ $A\in \fD$, then
the conclusion follows from the previous case. The only thing we have to observe is that
$\frac{(\rho(\bv))^{-1}}{\tr (\rho(\bv))^{-1}}= \rho(-\bv)$ holds which follows from \cite[Remark 3.5]{kim}.
\par 
Finally, if $\alpha(A)=I/2$, then we clearly have $\rho^{-1} \circ \alpha \circ \rho=0.$
\par
The converse statement that the formulas in (i) and (ii) 
define continuous endomorphisms of the Einstein gyrogroup is just obvious.
\qed
\end{proofs}

\begin{remark}
We conclude our note with some remarks.

Above we have given the complete description of all continuous endomorphisms of $\bB$ under the operation of Einstein velocity addition.
In the recent paper \cite{abe} Abe have described the automorphisms of the Einstein gyrovector space 
\cite[Theorem 3.1]{abe}. He concludes that those automorphisms are exactly the restrictions of orthogonal linear transformations onto the open unit ball. To see clearly the content of his result which looks very closely related to ours, one needs to be cautious and look at the definition \cite[Definition 2.13]{abe} of automorphisms of gyrovector spaces. In fact, that definition includes the assumption about the preservation of the inner product. Hence, Abe's result says that every bijective map of $\bB$ which preserves the Einstein addition (plus a sort of scalar multiplication) and also preserves the inner product necessarily originates from an orthogonal linear transformation on $\R^3$.

We need to point out that the requirement concerning the inner product preserving property is very strong, it alone implies the above conclusion. Indeed, any inner product preserving map $\phi:\bB \to \bB$ easily extends to an inner product preserving map on $\R^3$, see the proof of \cite[Lemma 4.4]{abe}. Moreover, it is well-known that on any inner product space the inner product preserving maps are automatically linear. That means that Abe's result carries no information concerning the automorphism of $\bB$ endowed merely with the Einstein addition $\oplus$ and the usual topology. Let us remark at this point that it is not difficult to see that the usual topology coincides with the topology generated by the Einstein gyrometric, see \cite[Example 2.10]{abe}.

So we can tell that our result is much different and in fact much stronger than Abe's but we also have to point out that
our result is proved only in three dimension. 
The main reason for this is that Kim's isomorphic identification between the open unit ball and the set of all regular density matrices is valid only in that low dimensional case. Though, for its physical content, the most important case is certainly this one,
it would very be interesting to know what happens in higher dimensions. We propose this as an open problem.
\end{remark}

\end{document}